	\newtheorem{dft}{Definition}
	\newtheorem{prop}{Proposition}
	\newtheorem{lemma}{Lemma}
\def\pth#1{\left(#1\right)}                
\def\acc#1{\left\{#1\right\}}              
\def\cro#1{\left[#1\right]}
\def\Exp#1{\exp\cro{#1}}
\def\AND{\text{and\:}}
\def\FOR{\text{for\:}}
\def\WITH{\text{with\:}}
\newsavebox{\fminibox}
\newlength{\fminilength}
 \def\T{^\tD} \def\+{^\dagger}
\def\nequiv{\not\kern-.05em\equiv}
\def\egal{\kern-.5em=\kern-.5em}        
\def\propt{\kern-.2em\propto\kern-.2em} 
\def\wt#1{\widetilde{#1}} 
\def\intdouble{\int\kern-0.3em\int}
\def\inttriple{\int\kern-0.3em\int\kern-0.3em\int}
\def\rond#1{\overset{\kern-0.33em~_\circ}{#1}}
\def\rondit[#1]#2{\overset{\kern#1~_\circ}{#2}}
	\def\XS{\xspace}
	\def\eg{\textit{e.g.}\XS}
	\def\ie{\textit{i.e.}\XS}
	\def\gx{\gamma_{\xb}}
	\def\gn{\gamma_{\nb}}
	\def\alphax{\alpha_{\xb}}
	\def\betax{\beta_{\xb}}
	\def\alphan{\alpha_{\nb}}
	\def\betan{\beta_{\nb}}
	\def\thetat{\thetab^{(t)}}
	\def\xt{\xb^{(t)}}
	\def\xtmun{\xb^{(t-1)}}
	\definecolor{Bleu}{rgb}{0.25,0.25,1}
	\definecolor{Bleu}{rgb}{0.0,0.0,0.75}
	\definecolor{Vert}{rgb}{0.4,0.7,0.4}
	\definecolor{Vert}{rgb}{0.1,0.75,0.1}
	\definecolor{Rouge}{rgb}{0.75,0.2,0.2}
	\definecolor{Noir}{cmyk}{0,0,0,1}
\begin{document}

\title{Efficient MCMC for Joint Inversion in High-Dimensional Space}
\title{Gradient Scan Sampler : an efficient MCMC algorithm in high-dimensional space \\Application in inverse problems}
\title{Gradient Scan Gibbs Sampler : \\ an efficient high-dimensional stochastic algorithm  \\Application in inverse problems}
\title{Gradient Scan Gibbs Sampler: \\ an efficient algorithm for high-dimensional Gaussian distributions}

\author{O. F\'eron$^{*}$, F. Orieux and J.-F. Giovannelli
\thanks{O. F\'eron is with {EDF Research \& Developments}, 92140 Clamart, France and with Univ. Paris Dauphine, FiME, 75116 Paris, France, \texttt{olivier-2.feron@edf.fr}. F. Orieux is with Univ. Paris-Sud 11, L2S, UMR 8506, 91190 Gif-sur-Yvette, France, \texttt{orieux@l2s.centralesupelec.fr}. J.-F. Giovannelli is with Univ. Bordeaux, IMS, UMR 5218, F-33400 Talence, France, \texttt{Giova@IMS-Bordeaux.fr}.}
}


\maketitle

\begin{abstract}
This paper deals with Gibbs samplers that include high dimensional conditional Gaussian distributions. It  proposes an efficient algorithm that avoids the high dimensional Gaussian sampling and relies on a random excursion along a small set of directions. The algorithm is proved to converge, \ie the drawn samples are asymptotically distributed according to the target distribution.
%
%
Our main motivation is in inverse problems related to general linear observation models and their solution in a hierarchical Bayesian framework implemented through sampling algorithms. It finds direct applications in semi-blind\,/\,unsupervised methods as well as in some non-Gaussian methods. The paper provides an illustration focused on the unsupervised estimation for super-resolution methods.
%
%
%
%
%
\end{abstract}



\section{Introduction}\label{sec:introduction}

\subsection{Context and problem statement}

Gaussian distributions are common throughout signal and image processing, machine learning, statistics,\dots being convenient from both theoretical and numerical standpoints. Moreover, they are versatile enough to describe very diverse situations. Nevertheless, efficient sampling including these distributions is a cumbersome  problem in high dimensions and the current paper deals with this question.

Our main motivation here is in inverse problems~\cite{Idier08,Giovannelli15} and the methodology resorts to a hierarchical Bayesian strategy, numerically implemented through Monte-Carlo Markov Chain and more specifically the Gibbs Sampler (GS). Indeed, consider the general linear direct model $\yb=\Ab \xb + \nb$, where $\yb$, $\nb$ and $\xb$ are the observation, the noise and the unknown image and $\Ab$ is a given linear operator. Consider, again, two independent prior distributions for $\nb$ and $\xb$ that are Gaussian conditionally to a vector~$\thetab$, namely the hyperparameter vector. The estimation of both $\xb$ and $\thetab$ relies on the sampling of the joint posterior $p(\xb,\thetab|\yb)$, and this is the core question of the paper. It commonly requires the handling of the high dimensional conditional posterior $p(\xb|\thetab,\yb)$ that is Gaussian with given mean $\mb$ and precision $\Qb$.

The framework directly covers non-stationary and inhomogeneous Gaussian models for image and noise. The paper has also fallouts for non-Gaussian models based on conditionally Gaussian ones involving auxiliary\,/\,latent variables\footnote{It is based on the fact that for a couple of random variables $(U,V)$, the conditional law for $U|V$ is Gaussian and the marginal law for $U$ is non-Gaussian. A famous example is a Gaussian variable with precision under a gamma law: the resulting marginal follow a Student law.} (\eg, location or scale mixtures of Gaussian) for edge preserving~\cite{Geman84,Geman95,Giovannelli08} and for sparse signals~\cite{Tan10,Kail12}. It also includes other hierarchical models~\cite{Feron07,Ayasso10} involving labels for inversion-segmentation. This framework also includes linear variant direct models and some non-linear direct models, based on conditional linear ones, \eg bilinear or multilinear. In addition, it covers a majority of current inverse problems, \eg unsupervised \cite{Giovannelli08} and semi-blind \cite{Orieux10}, by including hyperparameters and acquisition parameters in the vector~$\thetab$.

Large scale Gaussian distributions are also useful for Internet data processing, \eg to model social networks and to develop recommender systems~\cite{Liu11}. They are also widely used in epidemiology and disease mapping~\cite{Besag72,Rue01} as they provide a simple way to include spatial correlations. The question is also in relation with spatial linear regression with (smooth) spatially varying parameters~\cite{Gelfand03}. In these cases the question of efficient sampling including Gaussian distributions in high dimensions becomes crucial and it is all the more true in the ``Big Data'' context.

\subsection{Existing approaches}

The difficulty is directly related to handling the high-dimensional precision $\Qb$. The factorization (Cholesky, square root,\dots), diagonalization and inversion of $\Qb$ could be used but they are generally infeasible in high dimensions due to both computational cost and memory footprint.
Nevertheless, such solutions are practicable in two famous cases.
\begin{itemize}
\item If $\Qb$ is circulant or circulant-block-circulant an efficient strategy~\cite{Chellappa85,Chellappa92} relies on its diagonalization computed by FFT. More generally, an efficient strategy exists if $\Qb$ is diagonalizable by a fast transform, \eg discrete cosine transform for Neumann boundary conditions~\cite{Bardsley12,Bardsley13}. 
\item When $\Qb$ is sparse, a possible strategy~\cite{Rue01,Rue05,Lalanne01} relies on a Cholesky decomposition and a linear system resolution. Another strategy is a Gibbs sampler~\cite{Winkler03} that simultaneously updates large blocks of variables.
\end{itemize}
In order to address more general cases, solutions founded on iterative algorithms for objective optimization or linear system resolution have recently been proposed.

\begin{enumerate}


%
%

\item An efficient algorithm has been proposed by several authors \cite{Tan10,Papandreou10,Orieux12, Bardsley12, Bardsley13} (previously used in applications \cite{Feron07,Orieux10}). It is founded on a Perturbation-Optimization principle: adequate stochastic perturbation of a quadratic criterion and optimization of the perturbed criterion. However, in order to obtain a sample from the right distribution, an exact optimization is needed, but practically an empirical truncation of the iterations is implemented, leading to an approximate sample. \cite{Gilavert13} introduces a Metropolis step in order to asymptotically retrieve an exact sample and then to ensure, in a global MCMC procedure, the convergence to the correct invariant distribution.
\item In \cite{Fox08,Parker12} the authors propose a Conjugate Direction Sampler (CDS) based on two crucial properties: $(i)$ a Gaussian distribution admits Gaussian conditional distributions and $(ii)$ a set of mutually conjugate directions w.r.t.~$\Qb$ is available. The key point of the algorithm is to sample along these mutually conjugate directions instead of optimize as in the classic Conjugate Gradient optimization algorithm.
\end{enumerate}

In the first case, the only constraint on $\Qb$ is that a sample from $\Nc(0,\Qb)$ must be accessible, which is often the case in inverse problem applications. In the second case, $\Qb$ must have only distinct eigenvalues to make the CDS give an exact sample. Otherwise it leads to an approximate sample as described in \cite{Parker12}.

The proposed algorithm uses the same approach as the CDS and extends the efficiency to, theoretically, any matrix $\Qb$.

\subsection{Contribution}
The existing methods described above and the proposed one are both founded on a Gibbs Sampler. However, the existing ones attempt to sample the high dimensional Gaussian component $\xb\in \Rbb^N$ whereas the proposed method does not. Our main contribution is to avoid the high dimensional sampling and only requires small dimensional ones.
More precisely, given a subset $D \subset \Rbb^N$, the keystone of the advance is to sample the sub-component of $\xb$ according to the subset $D$. It must be sampled under the appropriate conditional distribution $\pi(\xb_{D} | \xb_{\setminus D}, \thetab)$, with the decomposition $\xb=(\xb_D,\xb_{\setminus D})$. The algorithm takes advantage of the ease of calculating the conditional pdf of a multivariate Gaussian, when $D$ is appropriately built, as explained in section \ref{sec:gradient-scan-gibbs}.
These ideas are strongly related to different existing works.
\begin{itemize}
	\item If the subset $D$  is composed of only one direction in the canonical coordinates, the algorithm amounts to a pixel-by-pixel GS \cite{Geman84}.
	\item The marginal chain $\xb^{(t)}$ can also be viewed as the one produced by a specific random scan sampler~\cite{Levine05,Levine06,Latuszynski13}. The random scans are related to the random choice of $D$, depending on the current value $\thetab^{(t)}$.
	\item Other algorithms based on optimization principles \cite{Parker12,Orieux12b} aim at producing a complete optimization. On the contrary, in essence, the proposed approach only requires a few steps of the optimization process.
	\item A similar idea is at work in Hamiltonian (or Langevin) Monte Carlo \cite{Stramer99a,Stramer99b,Duane87, Neal10} (see also~\cite{Vacar11}): the proposal law takes advantage of an ascent direction of the target to increase the acceptation probability. Here, the exact distribution is sampled, so the proposal is always accepted.
\end{itemize}
%

However, to our knowledge, the proposed algorithm does not directly join the class of existing strategies. One contribution of this paper is to give sufficient assumptions for convergence, \ie the samples are asymptotically distributed according to the joint pdf $p(\xb,\thetab)$. 

\subsection{Outline}

Subsequently, Section~\ref{sec:gradient-scan-gibbs} presents the proposed algorithm and section~\ref{sec:super-resolution} gives an illustration through an academic problem in super-resolution. Section~\ref{sec:conclusion} presents conclusions and perspectives.

\section{Gradient Scan Gibbs Sampler}
\label{sec:gradient-scan-gibbs}
In this section we describe the proposed algorithm: a Gibbs sampler with a high dimensional conditional Gaussian distribution. The objective is to generate samples from a joint distribution $p(\xb,\thetab)$, where $\xb\in \Rbb^{N}$ is highly dimensional and $p(\xb|\thetab)$ is a Gaussian distribution $\Nc(\mb_\thetab,\Qb_\thetab^{-1})$:
\begin{equation}
p(\xb|\thetab)=(2\pi)^{-N/2} (\det \Qb_\thetab)^{1/2} \exp -J_{\thetab}(\xb)
\end{equation}
with the potential $J_\thetab$ defined as:
\begin{equation}
J_\thetab(\xb) = \frac{1}{2} (\xb - \mb_\thetab)\T \Qb_\thetab (\xb - \mb_\thetab).
\end{equation}
All the other variables of the problem are grouped into $\thetab \in \Theta$ and we assume that the sampling from $p(\thetab|\xb)$ is tractable (directly or with several steps of the Gibbs sampler, including Metropolis-Hastings steps).

\subsection{Preliminary results}
\label{sec_preliminary_results}

This section presents classic definitions and results, mostly based on~\cite{Fox08}, needed to provide convergence proof and  links between matrix factorization and optimization\,/\,sampling procedures. 

Consider $\Qb$ a $N \times N$ symmetric definite positive matrix.

\begin{dft}
	A set $\acc{ \db_n, n=1,\dots,N }$ of non-zero vectors in $\Rbb^N$ such that:
	%
	$
	\db_n\T \, \Qb \, \db_m = 0 \quad \FOR n,m=1,\dots,N ,~ n \neq m
	$
	%
	is said mutually conjugate w.r.t. $\Qb$. \hfill $\triangle$
\end{dft}

A mutually conjugate set $\acc{\db_1,\dots,\db_N}$ w.r.t. $\Qb$ is a basis of $\Rbb^N$, then, for all $\xb\in\Rbb^N$:
\begin{equation*}
	\xb = \sum_{n=1}^N \alpha_n \db_n ~~~\WITH~ \alpha_n = \frac{\db_n\T \Qb \xb}{\db_n\T \Qb \db_n} \,.
\end{equation*}
So, if $\xb\sim\Nc(\mb,\Qb^{-1})$ is a Gaussian random vector with mean $\mb$ and precision $\Qb$, then the $\alpha_n$ are also Gaussian:
\begin{equation}\label{Eq:LoiAlpha}
	\alpha_n \sim \Nc \left(\frac{\db_n\T \Qb \mb}{\db_n\T \Qb \db_n} ~;~\frac{1}{\db_n\T \Qb \db_n} \right)
\end{equation}
and reciprocally if the $\alpha_n$ are distributed under~\eqref{Eq:LoiAlpha} then $\xb\sim\Nc(\mb,\Qb^{-1})$.

In particular, let $\xb^0 \in \Rbb^N$ be a ``current'' point and $\db_1 \in \Rbb^N$ a given ``direction''. One can find $\db_2,\dots,\db_N$ such that $\acc{\db_1,\dots,\db_N}$ is mutually conjugate w.r.t. $\Qb$ and $\xb^0$ writes:
\begin{equation*} 
	\xb^0 = \sum_{n=1}^N \alpha_n^0 \, \db_n.
\end{equation*}
Consider now the $N_D$-dimensional subset

\begin{align*}
	D(\xb^0)
	& = \acc{ \sum_{n=1}^N \alpha_n \db_n, 
	\alpha_n \in\eR,~ n \le N_D, \alpha_n=\alpha_n^0,~ n> N_D  }\\
	& = \Big\{ \xb^0 + \sum_{n=1}^{N_D}(\alpha_n - \alpha_n^0) \db_n, ~(\alpha_1,\ldots,\alpha_{N_D})\in\eR^{N_D}  \Big\}
\end{align*}
We are interested in the conditional pdf $p(\xb |\xb \in D(\xb^0))$. The following result and its proof can be found in \cite{Fox08}.

\begin{prop}
	A sample $\wt{\xb}$ according to $p(\xb |\xb \in D(\xb^0))$ can be obtained by:
	\begin{enumerate}\setlength{\itemsep}{0pt}
		\item sample independently the set $(\wt{\alpha}_1,\ldots,\wt{\alpha}_{N_D})$ with:
		$$
		 \wt{\alpha}_n \sim \Nc \pth{ \frac{\db_n\T \Qb (\xb^0-\mb)}{\db_n\T \Qb \db_n} ~;~ \frac{1}{\db_n\T \Qb \db_n} }, n=1,\ldots,N_D
		 $$
		%
		\item compute $\wt{\xb} = \xb^0-\sum_{n=1}^{N_D}\wt{\alpha}_n \, \db_n$
	\end{enumerate}
\end{prop}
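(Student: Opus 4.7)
The plan is to exploit the conjugate-basis decomposition already used to derive~\eqref{Eq:LoiAlpha}. Since $\acc{\db_1,\ldots,\db_N}$ is a basis of $\Rbb^N$, every $\xb\in\Rbb^N$ admits unique coefficients $(\beta_1,\ldots,\beta_N)$ such that $\xb=\sum_n \beta_n\db_n$. Applying~\eqref{Eq:LoiAlpha} to $\xb\sim\Nc(\mb,\Qb\M)$ shows that these coefficients are \emph{independent} Gaussians with mean $m_n=\db_n\T\Qb\mb/(\db_n\T\Qb\db_n)$ and variance $1/(\db_n\T\Qb\db_n)$. This independence is the lever of the argument: the joint law of $(\beta_1,\ldots,\beta_N)$ factorises, so conditioning on any subset of coordinates is trivial.

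Next I would rewrite the event $\xb\in D(\xb^0)$ in these coordinates. By the very definition of $D(\xb^0)$, this event is exactly $\acc{\beta_n=\alpha_n^0 \text{ for all } n>N_D}$. Independence then gives that, conditionally on this event, $(\beta_1,\ldots,\beta_{N_D})$ remain independent with their original marginals. Sampling from $p(\xb\mid\xb\in D(\xb^0))$ thus reduces to drawing $N_D$ independent scalar Gaussians $\beta_n\sim\Nc(m_n,\,1/\db_n\T\Qb\db_n)$ and reconstructing $\xb = \sum_{n\le N_D}\beta_n\db_n + \sum_{n>N_D}\alpha_n^0\db_n$.

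The last step is to match this with the parametrisation of the statement via the change of variable $\wt{\alpha}_n = \alpha_n^0 - \beta_n$ for $n\le N_D$. Recalling that $\xb^0=\sum_n \alpha_n^0\db_n$, substitution yields immediately $\wt{\xb} = \xb^0 - \sum_{n=1}^{N_D}\wt{\alpha}_n\db_n$, while the law of $\wt{\alpha}_n$ inherits the variance $1/\db_n\T\Qb\db_n$ and acquires the mean $\alpha_n^0-m_n$. A short computation using $\alpha_n^0=\db_n\T\Qb\xb^0/(\db_n\T\Qb\db_n)$ reduces this to $\db_n\T\Qb(\xb^0-\mb)/(\db_n\T\Qb\db_n)$, which is exactly the mean prescribed in step~1.

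No single step is genuinely hard; if there is a mild obstacle, it is the identification $m_n = \db_n\T\Qb\mb/(\db_n\T\Qb\db_n)$, which requires invoking the mutual conjugacy $\db_n\T\Qb\db_k=0$ for $k\ne n$ to decouple the linear system $\mb=\sum_k m_k\db_k$. Everything else is algebraic bookkeeping around the affine change of variables $\beta_n\mapsto\alpha_n^0-\beta_n$.
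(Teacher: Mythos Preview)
Your argument is correct and is essentially the standard proof: decompose $\xb$ in the conjugate basis, use that the resulting scalar coefficients are jointly Gaussian and uncorrelated (hence independent) by $\db_n\T\Qb\db_m=0$, identify the conditioning event as clamping the last $N-N_D$ coordinates, and then shift by $\alpha_n^0$ to land on the parametrisation of the statement. The paper itself does not give a proof of this proposition but only refers the reader to~\cite{Fox08}; your write-up is exactly the argument one would expect to find there.
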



\subsection{Gradient Scan Gibbs Sampler (GSGS)}

In the following we propose a Gibbs sampling algorithm in order to sample the joint probability $p(\xb,\thetab)$. The principle is to sample, at each iteration of the Gibbs sampler, only $N_D$ directions of $\xb$ instead of sampling the whole high dimensional variable. The chosen first direction of the set $D$ will be the gradient of the potential of $p(\xb|\thetab)$, with a stochastic perturbation to ensure, in the general case, the convergence of the resulting Markov chain. The following directions are chosen so as to get a mutually conjugate subset with respect to the precision of $p(\xb|\thetab)$.

We call our proposed algorithm the Gradient Scan Gibbs Sampler (GSGS) which is described by Algorithm \ref{algo_pgsgs}.
\begin{algorithm}[htb]

 \caption{: Gradient scan Gibbs sampler (GSGS).\label{algo_pgsgs}}

Define an initial point $\xb^{(0)}$, a number $N_D$ and a stopping criterion. Iterate \vspace*{2mm}.

\begin{algorithmic}[1]

   \State sample $\thetat \sim p(\thetab | \xtmun)$ \vspace*{3mm}

   \State set $\Qb_t=\Qb_{\thetat}$ and $\mb_t=\mb_{\thetat}$, and compute the gradient $\gb=\nabla J_\thetab(\xtmun) = \Qb_t(\xtmun - \mb_t)$\vspace*{2mm}

   \State sample a perturbation $\tilde{\varepsilonb} \sim p(\varepsilonb)$ \vspace*{2mm}

   \State compute a set of $N_D$ mutually conjugate directions $(\db_1,\ldots,\db_{N_D})$ w.r.t. $\Qb_t$ such that
   $$\db_1 = \gb + \tilde{\varepsilonb} $$


   \State sample independently the set $(\wt{\alpha}_1,\ldots,\wt{\alpha}_{N_D})$ with:
   $$
   \wt{\alpha}_n \sim \Nc \pth{ \frac{\db_n\T \gb}{\db_n\T \Qb_t \db_n} ~;~ \frac{1}{\db_n\T \Qb_t \db_n} }, ~ n\le N_D
   $$


   \State compute $\xt = \xtmun -\sum_{n=1}^{N_D}\wt{\alpha}_n \, \db_n$ \vspace*{2mm}

	\State $t\gets t+1$.

\end{algorithmic}
until the stopping criterion is reached.
\end{algorithm}
In this algorithm the chosen first sampling direction $\db_1$ is given by the gradient of the potential of $p(\xb|\thetab)$, with an additional random perturbation $\tilde{\varepsilonb}$ that follows a probability density $p(\varepsilonb)$. In fact, we expect the gradient to be a good direction towards regions of high probabilities. Also, the gradient is easily computable and so gives an easy rule to sample from any current point $\xb$. Moreover, the other conjugate directions are iteratively computable as described in the Conjugate Direction Sampling (CDS) algorithm\cite{Fox08} used to get an approximated sample from a Gaussian distribution. In fact, the GSGS is embedding steps of the CDS in a global Gibbs sampler.

The objective is now to study the convergence properties of the GSGS. We begin with two classic results.
\begin{itemize}
\item If the Markov chain is aperiodic, $\phi-$irreducible for some nonzero measure $\phi$\footnote{In all the paper we will consider $\phi$ as the Lebesgue measure and we will omit it for simplicity.}, and has an invariant probability $\pi$, then it converges to $\pi$ from $\pi$-almost every starting point (cf. Theorem 4.4 of \cite{Gilks96}).
\item Moreover, if the Markov chain is Harris recurrent, then it converges to $\pi$ from all starting point \cite{Gilks96, Roberts06a}.
\end{itemize}
The Harris recurrence of Gibbs samplers, or more generally Metropolis-within-Gibbs samplers is well studied in \cite{Roberts06a}. In particular, the Theorem 12 and Corollary 13 of \cite{Roberts06a} ensures that if the Markov chain produced by the GSGS is irreducible then it is Harris recurrent. Consequently, in the following we focus on showing that the Markov chain is aperiodic, irreducible and with stationary distribution $p(\xb,\thetab)$. \\

It is trivial to see that the Markov chain $(\xt,\thetat)_{t\ge 0}$, produced by the GSGS, is aperiodic since for any non-negligible subset $A \in \Rbb^N$ including $\xtmun$, $\Pbb(\xt \in A)>0$. The existence of an invariant probability and the irreducibility can be shown by thinking of a random scan Gibbs sampling for the marginal component $(\xt)_{t\ge 0}$.
\begin{prop}
\label{prop_proba_invariante}
The Markov chain produced by Algorithm~\ref{algo_pgsgs} admits $p(\xb,\thetab)$ as an invariant distribution, even without perturbations of the gradient direction (\ie $\tilde{\varepsilonb}=0$). \\
Moreover, if the density $p(\varepsilonb)$ is supported on $\Rbb^N$, the Markov chain produced by Algorithm \ref{algo_pgsgs} is irreducible, and therefore its law converges to $p(\xb,\thetab)$.
\end{prop}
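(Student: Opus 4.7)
The plan is to decompose the one-step kernel of Algorithm~\ref{algo_pgsgs} into two sub-kernels and analyze each in turn. Step~1 is a standard Gibbs update drawing $\thetat\sim p(\thetab\mid\xtmun)$, which manifestly leaves $p(\xb,\thetab)$ invariant. Everything else (steps~2--6) operates at fixed $\thetat$ and maps $\xtmun$ into $\xt$ through a random choice of affine subspace $D(\xtmun,\tilde{\varepsilonb})$ of dimension $N_D$, followed by a Gaussian draw on that subspace. The problem therefore reduces to showing that, for fixed $\thetab$, this $\xb$-kernel preserves $p(\xb\mid\thetab)$.

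For invariance, I would condition on $(\xtmun,\tilde{\varepsilonb})$: the directions $\db_1,\dots,\db_{N_D}$, and their completion into a full $\Qb_{\thetat}$-conjugate basis of $\Rbb^N$, are then entirely determined. By Proposition~1, the resulting $\xt$ is a draw from $p(\xb\mid\xb\in D(\xtmun),\thetat)$, the conditional Gaussian on the chosen affine subspace. Writing $\xb=\sum_n\alpha_n\db_n$ in this conjugate basis, the step resamples $(\alpha_1,\dots,\alpha_{N_D})$ from their Gaussian marginals while keeping $(\alpha_{N_D+1},\dots,\alpha_N)$ fixed. Since under $p(\xb\mid\thetat)$ the $\alpha_n$ are mutually independent Gaussians in any fixed conjugate basis, this move preserves the joint law of $\alpha$, and hence $p(\xb\mid\thetat)$. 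No property of $\tilde{\varepsilonb}$ is used in this argument, which accounts for the ``even without perturbations'' clause; composing with the $\thetab$-step gives invariance of $p(\xb,\thetab)$.

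For the second statement, assume $p(\varepsilonb)$ has full support on $\Rbb^N$. I would exhibit a strictly positive density for the one-step transition kernel of $\xt$: given any $\xb^\star\in\Rbb^N$, choose $\tilde{\varepsilonb}$ so that $\db_1=\gb+\tilde{\varepsilonb}$ points along $\xb^\star-\xtmun$; then $\xb^\star\in D(\xtmun,\tilde{\varepsilonb})$, and the Gaussian restriction to $D$ charges every neighborhood of $\xb^\star$ with positive density. Because $p(\tilde{\varepsilonb})$ has positive density everywhere, averaging over $\tilde{\varepsilonb}$ yields a positive lower bound on the marginal transition density at $\xb^\star$, which gives Lebesgue-irreducibility. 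Aperiodicity is immediate from absolute continuity of the kernel. Combining invariance, irreducibility, and aperiodicity with the results of \cite{Gilks96} and \cite{Roberts06a} quoted above yields the announced convergence. The main technical hurdle is the invariance step: the subspace $D(\xtmun,\tilde{\varepsilonb})$ depends on the current state through $\db_1=\gb+\tilde{\varepsilonb}$, so the update is not a standard random-scan Gibbs move, and the conjugate-basis argument must be backed up by a careful disintegration in the spirit of the state-dependent random-scan framework of \cite{Latuszynski13}.
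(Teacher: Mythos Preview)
Your overall plan matches the paper's: split the kernel into the $\thetab$-Gibbs step and an $\xb$-step that resamples coordinates in a $\Qb_{\thetat}$-conjugate basis, invoke independence of the $\alpha_n$ under $p(\xb\mid\thetat)$ for invariance, and use the full support of $p(\varepsilonb)$ for irreducibility. Your irreducibility argument is essentially the paper's (reach any target by pointing $\db_1$ along $\xb^\star-\xtmun$ and then appeal to continuity).

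The difficulty you flag in your last sentence, however, is not a residual technicality---it is a genuine obstruction, and neither your independence-of-$\alpha_n$ heuristic nor the paper's integration can be completed as written. The basis $\{\db_n\}$ is a function of $\xtmun$ through $\gb=\Qb_t(\xtmun-\mb_t)$, so the coordinate map $\xb\mapsto(\alpha_1,\dots,\alpha_N)$ is itself state-dependent: you cannot simultaneously condition on $\xtmun$ to freeze the basis and treat $\xtmun$ as a generic draw from $p(\xb\mid\thetat)$. The paper's computation makes the analogous slip, integrating out $\xb'_D$ as if the splitting $\xb'=(\xb'_D,\xb'_{\setminus D})$ were exogenous to $\xb'$. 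A clean counterexample to the ``even without perturbations'' clause: take $N=2$, $N_D=1$, $\thetab$ fixed with $\Qb_\thetab=\Ib$, $\mb_\thetab=0$, $\tilde{\varepsilonb}=0$. Then $\db_1=\gb=\xtmun$, and one step yields $\xt=Z\,\xtmun/\|\xtmun\|$ with $Z\sim\Nc(0,1)$; starting from $\xtmun\sim\Nc(0,\Ib)$ one gets
\[
\Esp{\xt(\xt)\T}=\Esp{\,\xtmun(\xtmun)\T/\|\xtmun\|^2\,}=\tfrac{1}{2}\,\Ib\neq\Ib,
\]
so $\Nc(0,\Ib)$ is not preserved. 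The random-scan frameworks you invoke only license state-dependent block selection when the selection rule depends on the state solely through the components being held fixed; here $\db_1$ depends on all of $\xtmun$, so that escape route is closed.
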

\begin{proof}
see appendix \ref{annexe_proba_invariante}.
\end{proof}
The Proposition \ref{prop_proba_invariante} then shows that the joint probability $p(\xb,\thetab)$ remains an invariant distribution in the limit case where the first direction  $\db_1$ is exactly the gradient of $p(\xb|\thetab)$, without random perturbation. However the perturbation is needed to ensure the irreducibility (and then the convergence) of the chain.
	%
	%

 If the gradient is not perturbed, the mutually conjugate set $D$ is then given by a deterministic function of $\thetat$ and $\xtmun$. In this case, we need more assumptions to ensure the Markov chain to be irreducible. For example, we can have the following result.
\begin{prop}
\label{prop_irreductibilite}
Suppose the following conditions are satisfied:
\begin{enumerate}[{H}-1]
\item The function $\thetab \mapsto \Qb_\thetab$ is continuous \label{cond_continuite}
\item $\forall (\xb,\thetab)\in \Rbb^N \times \Theta$ and $\forall r>0$, $\Pbb (\Bc(\thetab,r)|\xb)>0$, with $\Bc(\thetab,r)$ 
the ball in $\Theta$, centered in $\thetab$, of radius $r$. \label{cond_theta}
\item $\forall \xb \in \Rbb^N$, $\exists \thetab \in \Theta$ such as:
\begin{enumerate}[{H-3}.1]
\item $\Qb_\thetab$ has $N$ distinct eigenvalues, \label{cond_eigval}
\item $\xb-\mb_\thetab$ is not orthogonal to any eigenvector of~$\Qb_\thetab$,\label{cond_eigvec}
\end{enumerate}
\end{enumerate}
Then the Markov chain produced by Algorithm \ref{algo_pgsgs} without the perturbation step 3 ($\tilde{\varepsilonb}=0$) is irreducible.
\end{prop}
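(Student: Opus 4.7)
The plan is to establish irreducibility; aperiodicity has already been argued before the statement. Because $\thetab^{(t)}$ is resampled from $p(\thetab \mid \xb^{(t-1)})$ at the beginning of each iteration and, by H-2, this conditional law charges every open subset of $\Theta$, it is enough to prove the irreducibility of the marginal chain $(\xb^{(t)})_{t \ge 0}$ on $\Rbb^N$: once $\xb^{(T)}$ lies in the $\Rbb^N$-projection of a target open set, a further $\thetab$-step places $\thetab^{(T+1)}$ in the prescribed open subset of $\Theta$ with positive probability.

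Now fix an open ball $B(\xb^*,r)\subset\Rbb^N$ to be reached from any $\xb^{(0)}$. For a single iteration with parameter $\thetab$, Algorithm~\ref{algo_pgsgs} moves $\xb^{(t-1)}$ into $\xb^{(t-1)}+\mathrm{span}\acc{\db_1,\ldots,\db_{N_D}}$, and since the Gaussian weights $\wt\alpha_n$ have positive densities, any open subset of that affine subspace is hit with positive conditional probability. The role of H-3 is to supply, for any current $\xb$, a $\thetab^*$ for which $\Qb_{\thetab^*}$ has $N$ distinct eigenvalues and $\xb-\mb_{\thetab^*}$ has nonzero projection on each eigenvector; under these conditions the Krylov sequence $\acc{\gb,\Qb_{\thetab^*}\gb,\ldots,\Qb_{\thetab^*}^{N-1}\gb}$, which by construction of the conjugate directions spans the same subspace as $\acc{\db_1,\ldots,\db_N}$, is all of $\Rbb^N$. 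This is exactly the classical condition under which $N$ steps of Conjugate Gradient reach the exact minimizer, and it ensures at the geometric level that no nontrivial subspace of $\Rbb^N$ is invariant under the family of $N_D$-dimensional updates generated by Algorithm~\ref{algo_pgsgs} as $\thetab$ varies in $\Theta$.

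The next step is to build a finite deterministic path: auxiliary parameters $\thetab^*_1,\ldots,\thetab^*_T$ with $T$ of order $\lceil N/N_D\rceil$, together with intermediate points $\xb^{(1)},\ldots,\xb^{(T)}=\xb^*$, such that at iteration $t$ the displacement $\xb^{(t)}-\xb^{(t-1)}$ lies in the $N_D$-dimensional affine subspace generated by Algorithm~\ref{algo_pgsgs} when started at $\xb^{(t-1)}$ with parameter $\thetab^*_t$. Because H-3 is required \emph{for every} $\xb$, each $\thetab^*_t$ can be adapted to the current state $\xb^{(t-1)}$, which is crucial since the Krylov basis depends on both. To lift this deterministic trajectory to a positive-probability event, I would invoke H-1: $\Qb_\thetab$, the gradient $\gb$, the conjugate directions produced by the Gram--Schmidt-type recursion, and the conditional Gaussian parameters of the $\wt\alpha_n$ all depend continuously on $\thetab$, so whenever $\thetab^{(t)}$ lies in a sufficiently small ball around $\thetab^*_t$ the conditional law of $\xb^{(t)}$ given $\xb^{(t-1)}$ puts positive mass on any open neighborhood of the deterministic target $\xb^{(t)}$. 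By H-2 such a ball is hit with positive probability at every iteration, and concatenating the $T$ steps yields the required positive $T$-step transition probability to $B(\xb^*,r)$.

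The main obstacle I anticipate is the construction of the deterministic path $(\thetab^*_t,\xb^{(t)})_{t=1}^T$ when $N_D<N$. At each iteration the Krylov basis depends jointly on $\thetab^*_t$ and on the previous point $\xb^{(t-1)}$, so the $N_D$-dimensional reachable subspaces cannot be prescribed independently across iterations; a genuinely geometric argument, presumably exploiting the full strength of H-3 through successive perturbations of $\thetab^*$ after each move, is needed to show that the composition of these affine moves covers any target up to arbitrary precision. The reduction to the marginal $\xb$-chain, the Krylov-span consequence of H-3, and the continuity/positive-density wrap-up via H-1 and H-2 are, by contrast, fairly standard once the geometric step is in place.
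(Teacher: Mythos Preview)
Your overall architecture---reduce to the marginal $\xb$-chain via H-2, construct a deterministic path to the target, then lift to positive probability by H-1 continuity and the positive Gaussian densities---matches the paper's. The gap is exactly where you locate it: you do not actually construct the deterministic path, and your plan to vary $\thetab^*_t$ at each step so that successive $N_D$-dimensional Krylov slices eventually cover $\Rbb^N$ is the hard route. Because each Krylov slice depends on the \emph{current} $\xb^{(t-1)}$, you would have to control how the reachable subspaces compose across iterations, and nothing in H-1--H-3 gives you direct leverage on that composition.

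The paper sidesteps this obstacle entirely by \emph{fixing} a single $\thetab$ (the one given by H-3 for $\xb^{(0)}$) for all the steps, rather than varying it. After the change of variables to $\mb_\thetab=0$ and $\Qb_\thetab=\diag(q_1,\ldots,q_N)$, the $N_D=1$ update reads coordinatewise as $x_n^{(t)}=(1-\alpha^{(t)}q_n)\,x_n^{(t-1)}$, so after $T$ steps $x_n^{(T)}=P(q_n)\,x_n^{(0)}$ with $P(z)=\prod_{t=1}^{T}(1-\alpha^{(t)}z)$. Reaching an arbitrary $\yb$ becomes a polynomial interpolation problem: find a real-rooted polynomial with $P(0)=1$ and $P(q_n)=y_n/x_n^{(0)}$ for $n=1,\ldots,N$ (H-3.1 gives distinct $q_n$, H-3.2 gives $x_n^{(0)}\neq 0$). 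Degree $N$ meets the $N{+}1$ constraints but may produce complex roots; the paper takes $T=N{+}1$ and adds a multiple of $\prod_n(z-q_n)\,z$ to force all roots real. This is the missing key idea; note in particular that $T=N{+}1$, not $\lceil N/N_D\rceil$, and that the case $N_D\ge 2$ follows since the one-direction move is contained (with positive probability) in the $N_D$-direction move.
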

\begin{proof}
see appendix \ref{annexe_irreductibilite}
\end{proof}
The conditions described in Proposition \ref{prop_irreductibilite} are very restrictive and, in particular, condition H-3.1 is difficult, if not impossible, to prove in practice. This condition ensures that every non-negligible subset of $\Rbb^N$ can be reached with a non-zero probability. It can be interpreted in the framework of Krylov spaces as in \cite{Parker12}. For example, if there is $t$ such as the Krylov space
\begin{equation*}
\Kc^N(\Qb_{\thetat},\xt):= \text{span}\left(\xt,\Qb_{\thetat}\xt,\dots,\Qb_{\thetat}^N\xt \right)
\end{equation*}
is of rank $N$ then the Markov chain is irreducible. This condition can be weakened in our case because the Gaussian parameters $\mb_{\thetat}$ and $\Qb_{\thetat}$ are changing since $\thetab$ is changing at each iteration of the Gibbs sampler. Therefore a sufficient condition to ensure the irreducibility of the chain can be expressed as follows:
\begin{prop}
\label{prop_krylov}
If there is $T>N$ such as the union of Krylov spaces
\begin{equation*}
\cup_{t=1}^T \Kc^N(\Qb_\thetat,\xt) \cup \Kc^N(\Qb_\thetat,\mb_\thetat)
\end{equation*}
is of rank $N$ then the Markov chain built by the GSGS without perturbation of the gradient is irreducible.
\end{prop}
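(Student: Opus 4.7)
The plan is to adapt the argument that would underlie Proposition~\ref{prop_irreductibilite}, replacing its one-iteration reachability requirement by a cumulative reachability argument accumulated over $T>N$ iterations. The hypothesis of Proposition~\ref{prop_irreductibilite}, most notably H-\ref{cond_eigval}, was used to force a single iteration to sweep the whole Krylov basis of $\Rbb^N$; here the gain comes from letting several consecutive iterations, each producing a lower-dimensional Krylov slice, jointly span $\Rbb^N$.

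First I would show that at each iteration $t$ the update $\xt-\xtmun$ lies in a small Krylov subspace attached to the current $\thetat$. By the conjugate-direction construction initialised with $\db_1=\gb_t:=\Qb_{\thetat}(\xtmun-\mb_{\thetat})$, one has $\text{span}(\db_1,\ldots,\db_{N_D})=\Kc^{N_D-1}(\Qb_{\thetat},\gb_t)$. Writing $\Qb_{\thetat}^k\gb_t=\Qb_{\thetat}^{k+1}\xtmun-\Qb_{\thetat}^{k+1}\mb_{\thetat}$ and using the invertibility of $\Qb_{\thetat}$, I would deduce
\[
\Kc^{N_D-1}(\Qb_{\thetat},\gb_t)\;\subset\;\Kc^N(\Qb_{\thetat},\xtmun)\,+\,\Kc^N(\Qb_{\thetat},\mb_{\thetat}).
\]

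Next I would sum these incremental subspaces over the $T$ iterations: $\xb^{(T)}-\xb^{(0)}$ belongs to $\sum_{t=1}^T\bigl[\Kc^N(\Qb_{\thetat},\xtmun)+\Kc^N(\Qb_{\thetat},\mb_{\thetat})\bigr]$. Because at each step $\xt$ differs from $\xtmun$ by a vector already in the accumulated span, a short induction lets me replace $\xtmun$ by $\xt$ in each Krylov generator, thereby matching the formulation of the hypothesis. The stated assumption thus asserts exactly that the cumulative reachable subspace is all of $\Rbb^N$.

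Finally, to conclude irreducibility I would follow the scheme of Proposition~\ref{prop_irreductibilite}: for any open set $A\subset\Rbb^N\times\Theta$ of positive Lebesgue measure and any initial $(\xb^{(0)},\thetab^{(0)})$, I would build a $T$-step path reaching $A$ with positive density. The $T$-step map is continuous in the coefficients $\wt{\alpha}_n^{(t)}$ (Gaussian with full support) and in the $\thetat$'s (with full-support conditional densities, as in H-\ref{cond_theta}); since the reachable image spans $\Rbb^N$, a change-of-variables argument then yields a strictly positive $T$-step transition density around every point of $A$.

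The main obstacle is the second step: the hypothesis involves Krylov spaces generated by the post-update points $\xt$, whereas the incremental reachability at iteration $t$ is naturally expressed in terms of the pre-update point $\xtmun$. Making this reconciliation rigorous, and carefully tracking the dimension growth of the accumulated span when $N_D<N$ so that each iteration only contributes a sub-Krylov piece, is the delicate bookkeeping that drives the proof.
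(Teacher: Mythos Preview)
Your approach is correct and in line with the paper's; the paper's own proof is a single sentence asserting that the Krylov rank condition implies $\Pbb(\xb^{(T)}\in A\mid \xb^{(0)})>0$ for every non-negligible $A\subset\Rbb^N$, hence irreducibility. Your plan supplies the mechanism that this one-liner leaves implicit---identifying each step's reachable increment as a Krylov slice of the current gradient, accumulating these over $T$ iterations, reconciling the pre- versus post-update generators via Cayley--Hamilton, and closing with a continuity/density argument---so it is in fact considerably more detailed than what the paper provides.
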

\begin{proof}
The condition implies that for any non-negligible subset $A \subset \Rbb^N$, $\Pbb\left(\xb^{(T)} \in A | \xb^{(0)}\right)>0$, which ensures the irreducibility.
\end{proof}
The issue of determining general conditions, as in Proposition \ref{prop_irreductibilite}, is an open problem at this time. The fact that the condition described in Proposition \ref{prop_krylov} is satisfied, highly depends on the model's characteristics. That is why the GSGS (with the random perturbation step 3) is the one that ensures, in all cases, the convergence of the Markov chain to the joint distribution $p(\xb,\thetab)$.

The presented results do not allow us to get any convergence rate of the Markov chain. The latter is, in fact, very important to ensure in practice the efficiency of the estimators produced by simulations in finite time. In particular, the geometric ergodicity \cite{Meyn93} is a very well known property that gives a Central Limit Theorem and ensures the Markov chain to quickly converge and give estimations of standard errors. However the Algorithm \ref{algo_pgsgs} aims to be general while the precise study of geometric convergence (especially to quantify the convergence rate) would need to specify the distributions on the parameters $\thetab$ and on the perturbation $\varepsilonb$. At this time, only weak assumptions are considered on these probabilities and the next section discusses about the different choices of $p(\varepsilonb)$ from a feasibility point of view.


\subsection{Choice of $p(\varepsilonb)$}
\label{sec:choice_epsilon}
As previously specified, the only condition to ensure the convergence of the GSGS in the general case, is to choose a distribution $p(\varepsilonb)$ supported in $\Rbb^N$. In practice we also expect a sample from $p(\varepsilonb)$ to be easily accessible. A natural choice is the Gaussian iid distribution $\Nc(0,\Ib_N)$, $\Ib_N$ being the $N \times N$ identity matrix. This was already studied in \cite{Orieux15} in the case of only sampling from a Gaussian distribution $p(\xb)$ and where results are shown in small dimensions. 

Our empirical studies in high dimension (one example is shown in section \ref{sec:super-resolution}) incited us to choose the Gaussian distribution $\Nc(0,\Qb_\thetab)$, when it is possible. The sampling from this distribution may actually be easily computable, provided that $\Qb_\thetab$ has, for example, the specific factorization form described in \cite{Orieux12b}:
\begin{equation*}
\Qb_\thetab = \sum_{k=1}^K \Mb_k\T \Rb_k^{-1} \Mb_k
\end{equation*}
In this case, the sampling from $\Nc(0,\Qb_\thetab)$ is easily computable by using the Perturbation Optimization (PO) algorithm \cite{Orieux12b}. The latter consists in (i) randomly modifying the potential $J_\thetab(\xb)$ to get a perturbed potential $\tilde{J}_\thetab$ and (ii) optimizing $\tilde{J}_\thetab$. The first step of this optimization procedure consists in computing the gradient $\nabla \tilde{J}_\thetab$ and it is trivial to show that it can be decomposed: $\nabla \tilde{J}_\thetab(\xb)=\nabla J_\thetab(\xb)+\varepsilonb$, with $\varepsilonb \sim \Nc(0,\Qb_\thetab)$. Therefore, the perturbed gradient $\db_1$ of the GSGS, with a random perturbation $\varepsilonb \sim \Nc(0,\Qb_\thetab)$, can be obtained by using the PO algorithm truncated to one step of the optimization procedure. \\
\indent
Although this choice is empirical, at this time, we may propose some intuition to recommend, when it is possible, the distribution $\Nc(0,\Qb_\thetab)$.
%
%
The first direction $\db_1$ is related to the gradient of $J_\thetab$, in accordance with the objective to get a direction towards regions of high probability. This gradient is mostly driven by the highest eigenvalues of $\Qb_\thetab$. The perturbation $\varepsilonb$ is only needed to ensure the GSGS convergence, but the objective is to keep a direction towards high probability regions. The sampling from $\Nc(0,\Qb_\thetab)$ seems to be a good compromise: it gives values of $\varepsilonb$ mostly driven by the highest eigenvalues of $\Qb_\thetab$ and then the resulting direction $\db_1$ still continues to encourage the exploration space of high probability.

We may also notice that some relaxations of the GSGS are possible, following classic arguments of a random scan Gibbs sampling. For example, it is not necessary to sample the perturbation from $p(\varepsilonb)$ at each iteration, it is sufficient to do this an infinite number of times to ensure the chain to be irreducible\footnote{From any point $\left( \xt,\thetat \right)$, let $s>t$ be the closest next time where $\varepsilonb$ is sampled, then for any non-negligible subset $A\in\Rbb^N \times \Thetab$, we have $P(\xt,A)>0$.}. As we will see in section \ref{sec:super-resolution}, a low frequency sampling of $\varepsilonb$ can improve the algorithm's efficiency.

\section{Unsupervised super resolution as a large scale problem}\label{sec:super-resolution}

\subsection{Problem statement}\label{sec:SR-Statement}

The paper details an application of the proposed GSGS to a super-resolution problem (identical to the one presented in~\cite{Orieux12b, Gilavert15}): several blurred, noisy and down-sampled (low resolution) observations of a scene are available to retrieve the original (high resolution) scene \cite{Park03,Rochefort06}.


The usual direct model reads: $\yb=\Ab\xb+\nb=\Sb\Hb\xb+\nb$. In this equation, $\yb \in \eR^M$ collects the pixels of the low resolution images (five $128 \times 128$ images, \ie $M=81920$) and $\xb \in \eR^N$ collects the pixels of the original image (one $256 \times 256$ image, \ie $N=65536$). The noise $\nb \in \eR^M$ accounts for measurement and modeling errors. $\Hb$ is a $N\times N$ circulant-block-circulant convolution matrix accounting for the optical and the sensor parts of the observation system. Here it is a square window of 5-pixel-width. $\Sb$ is a $M\times N$ matrix modeling motion (here translation) and decimation: it is a down-sampling binary matrix indicating which pixel of the blurred image is observed.


The chosen prior for the noise is $\nb \sim \Nc(\mathbf{0},\gn^{-1}\Ib)$, \ie uncorrelated. Regarding the object, the chosen prior accounts for smoothness:  $\xb \sim \Nc(\mathbf{0},\gx^{ -1} \Db\T\Db)$ where $\Db$ is the $N\times N$ circulant convolution matrix of the Laplacian filter. The hyperparameters $\gn$ and $\gx$ are unknown and the assigned priors are conjugate : Gamma distributions $\gn \sim \mathcal{G} \left(\alphan ; \betan \right)$ and  $\gx \sim \mathcal{G} \left(\alphax ; \betax \right)$. They are poorly informative for large variances and uninformative Jeffreys' prior when the $(\alphax,\betax)$ tends to $(0,0)$.
As a consequence, the full posterior pdf writes
\begin{align}
	p(\xb, \gx, \gn | \yb)
	& \propto  p(\yb | \xb, \gn) p(\xb | \gx) p(\gx) p(\gn)  \label{eq:1} \\
	& \propto \gn^{\alphan+N/2-1} \gx^{\alphax+(M-1)/2-1} \nonumber \\
	& \phantom{\propto}\ \Exp{-\gn \|\yb - \Sb\Hb\xb\|^2 /2} \Exp{- \betan\gn } \nonumber \\
	& \phantom{\propto}\ \Exp{- \gx\|\Db \xb\|^2 /2} \Exp{- \betax\gx }. \nonumber
\end{align}
The conditional law of the image writes
\begin{equation*}
	p(\xb| \yb, \gx, \gn) \propto \Exp{-\frac{\gn}{2} \|\yb - \Sb\Hb\xb\|^2 - \frac{\gx}{2} \|\Db \xb\|^2}.
\end{equation*}
Accordingly the negative logarithm gives the criterion
\begin{equation*}
	J_{\gx,\gn}(\xb) = \frac{\gn}{2} \|\yb - \Ab\xb\|^2 + \frac{\gx}{2} \|\Db \xb\|^2
\end{equation*}
and the gradient
\begin{align*}
	\nabla J_{\gx,\gn}(\xb) & = \gn \Ab\T(\Ab\xb -  \yb) + \gx \Db\T\Db \xb \\
	& = \Qb(\xb - \mb)
\end{align*}
with $\mb = \gn\Ab\T \yb$, and the Hessian
\begin{align*}
	\Qb_{\gx,\gn} = \nabla^2 J_{\gx,\gn}(\xb) & = \gn \Ab^t \Ab + \gx \Db\T\Db
\end{align*}

\subsection{Gibbs sampler}\label{sec:SR-Sampler}

The posterior pdf is explored by the proposed Gibbs sampler in
Algorithm~\ref{algo_sr}, based on the GSGS, that
iteratively updates $\gn$, $\gx$ and a subset of $\xb$. Regarding the hyperparameters, the conditional pdf are Gamma and their parameters are easy to compute.

\begin{algorithm}[htb]

	\caption{: GSGS for super-resolution. \label{algo_sr}}

	Set $t=1$, define an initial point $\xb^{(0)}$, and repeat

	\begin{algorithmic}[1]

		\State Sample $\gn^{(t)} \sim p\left(\gn | \yb, \xb^{(t-1)}\right)$
		as
		\begin{equation*}
			\mathcal{G}\left(\frac{N}{2}; \frac{2}{\|\yb - \Sb\Hb\xb^{(t-1)}\|^2} \right).
		\end{equation*}

		and $\gx^{(t)} \sim p\left(\gx | \yb, \xb^{(t-1)}\right)$
		as
		\begin{equation*}
			\mathcal{G}\left(\frac{M-1}{2}; \frac{2}{\|\Db\xb^{(t-1)}\|^2} \right).
		\end{equation*}

		\State Set $\Qb_t = \Qb_{\gx^{(t)},\gn^{(t)}} $ and compute the gradient
		\begin{equation*}
			\gb^{(t)}  = \nabla J_{\gx,\gn}\left(\xb^{(t-1)}\right) =
			\Qb_t (\xb^{(t-1)} - \mb)
		\end{equation*}

		\State \label{item:1} Sample a perturbation
		$\varepsilonb^{(t)} \sim \Nc(0,\Qb_t)$

		\State \label{item:4} Compute a set of $N_D$ mutually conjugate directions
		$\{\db_1, \dots, \db_{N_D}\}$ with the first being
		$\db_1 = \gb^{(t)} + \varepsilonb^{(t)}$.

		\label{item:2}
		\State Sample independently the set 
$(\wt{\alpha}_n)_{n=1,\ldots,N_D}$ with:
		\begin{equation*}
			\wt{\alpha}_n \sim \mathcal{N}\left( \frac{\db_n \gb^{(t)}}
			{\db_n \Qb_t \db_n};
			\frac{1}{\db_n \Qb_t \db_n} \right)
		\end{equation*}
		\State Compute
			$\xb^{(t)} \gets \xb^{(t-1)} -\sum_{n=1}^{N_D}\wt{\alpha}_n \, \db_n.$

		\State $t\gets t+1$.

	\end{algorithmic}
	until the stopping criterion is reached.

\end{algorithm}

The set of mutually conjugate directions w.r.t. $\Qb_{\gx,\gn}$, at step~\ref{item:4} of
Algorithm~\ref{algo_sr}, is computed by the Gram-Schmidt process applied
to gradient, as usually found in conjugated gradient optimization
algorithm. The procedure is similar to the algorithm described in
\cite{Parker12}. Finally the estimator is the posterior mean computed
as the empirical mean of the samples.

Despite the convergence proof with almost any law for the perturbation $\varepsilonb$ (provided that the density $p(\varepsilonb)$ is supported in $\Rbb^N$), some tuning is necessary to practically obtain a good
space's exploration. In practice, the step~\ref{item:1} has a major
influence and, as already discussed in section \ref{sec:choice_epsilon}, we observe that a working perturbation corresponds to those
of the PO algorithm \cite{Orieux12b}
\begin{equation*}
	\varepsilonb^{(t)} = {\gn^{(t)}}^{-1/2} \Ab\T \varepsilonb_{\nb} +
	{\gx^{(t)}}^{-1/2} \Db\T \varepsilonb_{\xb}
\end{equation*}
where $\varepsilonb_{\times}$ are two Gaussian normalized random vectors, leading to a Gaussian perturbation $\varepsilonb^{(t)}$ of covariance $\Qb_t$.  
However, the proposed algorithm has numerous advantages over the PO
algorithm. First the proposed algorithm has a convergence proof
because it does not suffer from truncation, even in the extreme case with $N_D=1$. Second the
perturbation has the sole constraint of having $\mathds{R}^N$ as
support. Moreover a perturbation is not required at each iteration. 

\subsection{Numerical results}\label{sec:R-SResults}

The posterior law (\ref{eq:1}) has been explored with the following
four algorithms or settings.
\begin{itemize}
\item The adaptive RJ-PO algorithm \cite{Gilavert15}, directly tuned with the acceptance probability, here chosen to be 0.9. This acceptance probability leads to an average
    number of around 150 iterations of conjugate gradients to compute the proposal, and with 6\% of rejected samples.
\item Algorithm~\ref{algo_sr} with $N_D = 150$. The idea is to build
  an algorithm close to RJ-PO's computing time.
\item Algorithm~\ref{algo_sr} with $N_D = 10$. The idea is to show
  that our algorithm offers the possibility to reduce the number of
  iterations while still offering a good exploration and with
  guaranteed convergence. We empirically found that $N_D=7$ is
    the lower limit case to have a good global
    exploration including the hyperparameters.
\item Algorithm~\ref{algo_sr} with $N_D = 2$. The idea is to show a
  very fast algorithm that offers a partially correct
  exploration. This case is particular in the sense that the
  perturbation is done only once for the whole algorithm.
\end{itemize}

The posterior mean estimations (\textsc{pm}) of the high-resolution
image are given in Fig.~\ref{fig:n1} as well as the posterior
standard deviation (\textsc{psd}). From these results we can say that
all algorithms provide similar quality for the image estimation. The same
statement can be made for the standard deviation. However the
posterior standard deviation with $N_D = 2$ seems
incorrect. A possible interpretation is that the perturbation vector $\varepsilonb$ is simulated only once during the whole algorithm. Thus, the space is surely not sufficiently explored and the covariance estimation is severely biased. Indeed, since $\varepsilonb_{\times}$ are drawn only once, the stochastic explorations are limited to the conjugate direction plus the
two directions $\varepsilonb_{\xb}$ and $\varepsilonb_{\nb}$. However the
mean estimation does not seem to be affected and this algorithm is
able to provide very quickly a good estimation of the image and hyperparameters
value.

\begin{figure*}[htbp]
	\centering

	\subfloat[RJ-PO \textsc{pm}]{
		\includegraphics[width=0.23\textwidth]{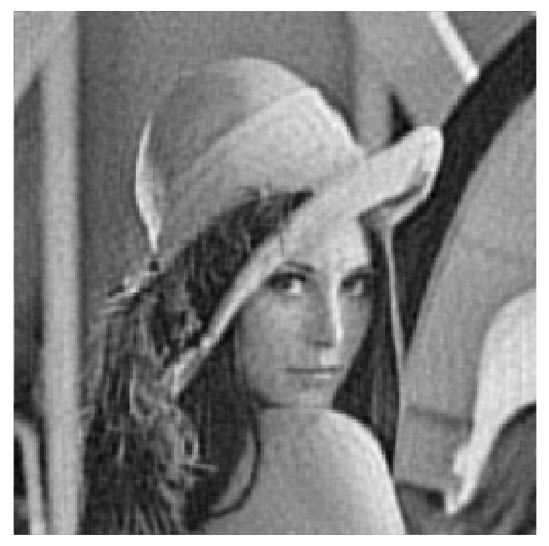}
	} %
	\subfloat[GSGS \textsc{pm} $N_D=150$]{
		\includegraphics[width=0.23\textwidth]{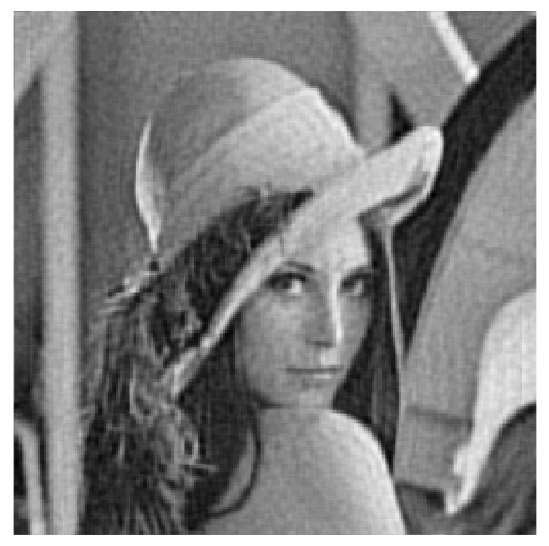}
	} %
	\subfloat[GSGS \textsc{pm} $N_D=10$]{
		\includegraphics[width=0.23\textwidth]{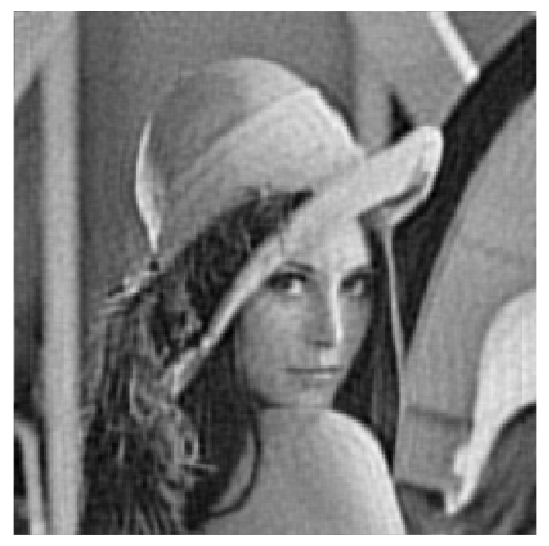}
	} %
	\subfloat[GSGS \textsc{pm} $N_D=2$]{
		\includegraphics[width=0.23\textwidth]{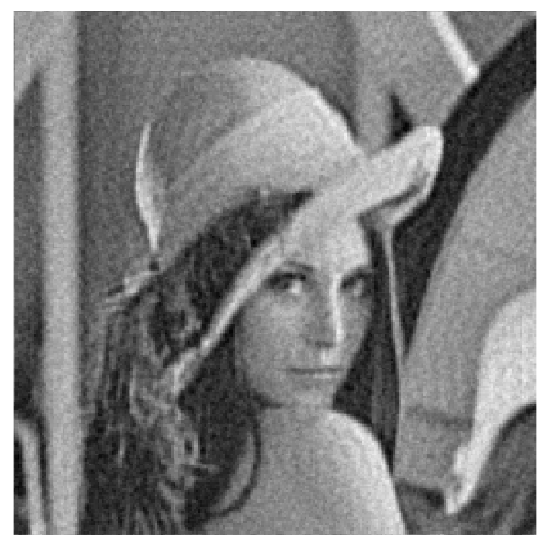}
	} %

	\subfloat[RJ-PO \textsc{psd}]{
		\includegraphics[width=0.23\textwidth]{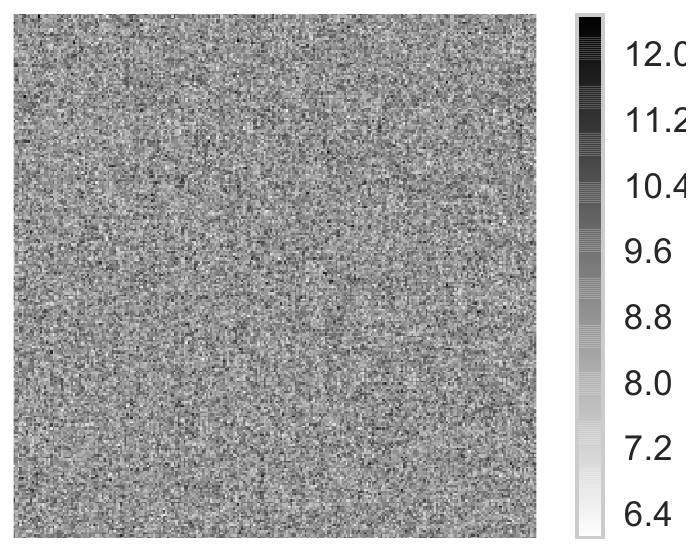}
	} %
	\subfloat[GSGS \textsc{psd} $N_D=150$]{
		\includegraphics[width=0.23\textwidth]{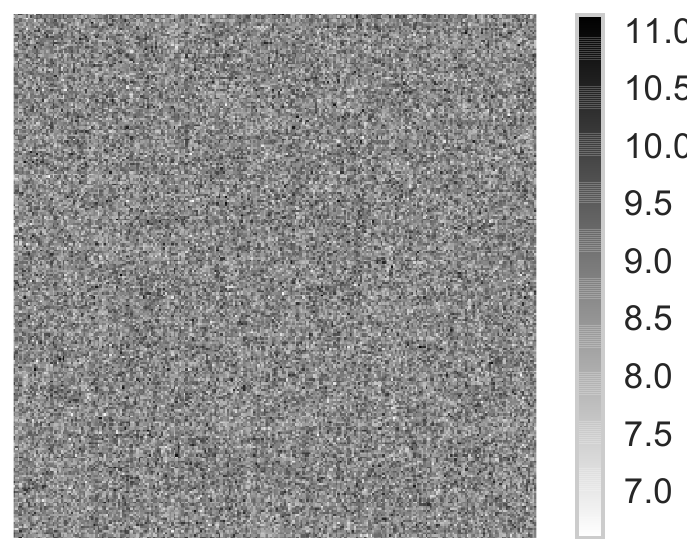}
	} %
	\subfloat[GSGS \textsc{psd} $N_D=10$]{
		\includegraphics[width=0.23\textwidth]{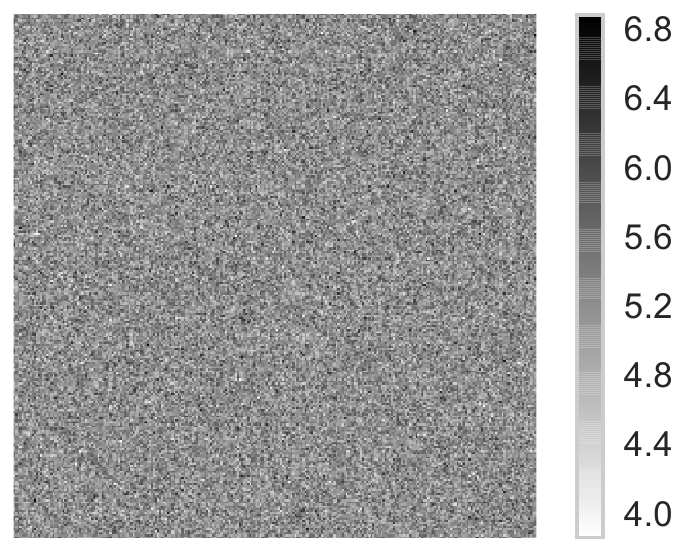}
	} %
	\subfloat[GSGS \textsc{psd} $N_D=2$]{
		\includegraphics[width=0.23\textwidth]{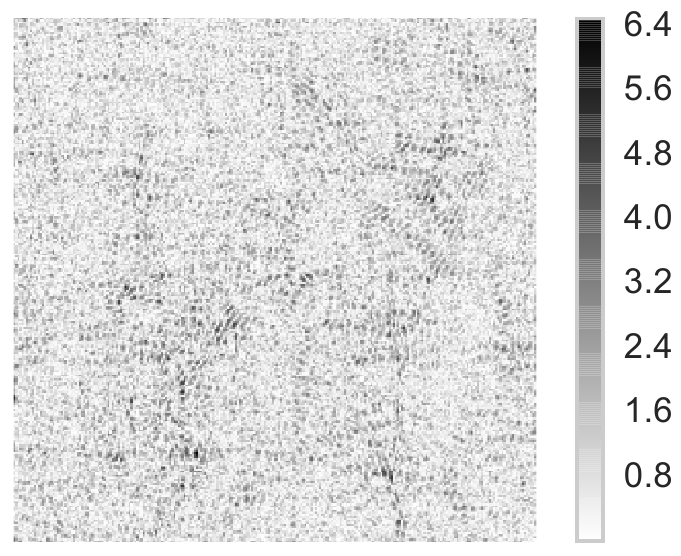}
	}

	\caption{Image results.}
	\label{fig:n1}
\end{figure*}

The chains of the hyperparameters are illustrated
in Fig.~\ref{fig:chain_n1}. Figs.~\ref{fig:gn_iter}
and~\ref{fig:gx_iter} represent the samples as function of the
iteration. We observe that, except for $N_D = 2$, all the chains
have the same behavior with the same convergence period. The $N_D = 2$
has slower convergence but reaches the same stationary distribution.

Figs.~\ref{fig:gn_times} and~\ref{fig:gx_times} represent the samples
as function of time (in seconds). The chains for RJ-PO and
GSGS with $N_D = 150$ have the same behavior. 
This result is obvious since both algorithms compute
almost the same number of gradients per iteration. That said, we see that for $N_D = 10$ and $N_D = 2$, the impact
on the convergence time is significant. The Tab.~\ref{tab:hyper} shows
some quantitative results. In particular the case $N_D = 10$ is ten
times faster than RJ-PO.

\begin{table}
	\centering
	\begin{tabular}{ccccc}
          & RJ-PO & 150 & 10 & 2 \\
          \hline
          $\widehat{\gn}$ & 0.9718 & 0.9694 & 0.9452 & 0.7078 \\
          $\widehat{\sigma_{\gn}}$ & 0.0063 & 0.0061 & 0.0066 & 0.3395 \\
          \hline
          $\widehat{\gx}$ & 1.07e-03 & 1.06e-03 & 1.95e-03 & 9.62e-03 \\
          $\widehat{\sigma_{\gx}}$ & 3.7e-05 & 1.7e-05 & 2.7e-05 & 6.2e-03 \\
          \hline
          loop [s.] & 4.4 & 2.4 & 0.2 & 0.1 \\
          total [s.] & 666 & 353 & 28 & 9
	\end{tabular}
	\caption{Hyper parameters values estimation with true $\gn = 1$.}
	\label{tab:hyper}
\end{table}

In addition, Tab.~\ref{tab:std10} shows a hyperparameter values
estimation with a higher noise level. Again the estimated values are
close and the $\gn$ parameter is correctly estimated.

\begin{table}
	\centering
	\begin{tabular}{cccc}
          &  RJ-PO & 10 & 2 \\
          \hline
          $\widehat{\gn}$  & 9.9e-03 & 9.9e-03 & 9.9e-03 \\
          $\widehat{\sigma_{\gn}}$ & 6.8e-05 & 4.8e-05 & 5.5e-05 \\
          \hline
          $\widehat{\gx}$  & 1.84e-03 & 3.28e-03 & 2.29e-03 \\
          $\widehat{\sigma_{\gx}}$ & 3.2e-04 & 7.0e-04 & 3.4e-05
	\end{tabular}

	\caption{Hyper parameters values with true $\gn = 0.01$.}
	\label{tab:std10}
\end{table}

\begin{figure}[htbp]
	\centering

	\subfloat[$\gn$ as iteration]{
		\includegraphics[width=0.2\textwidth]{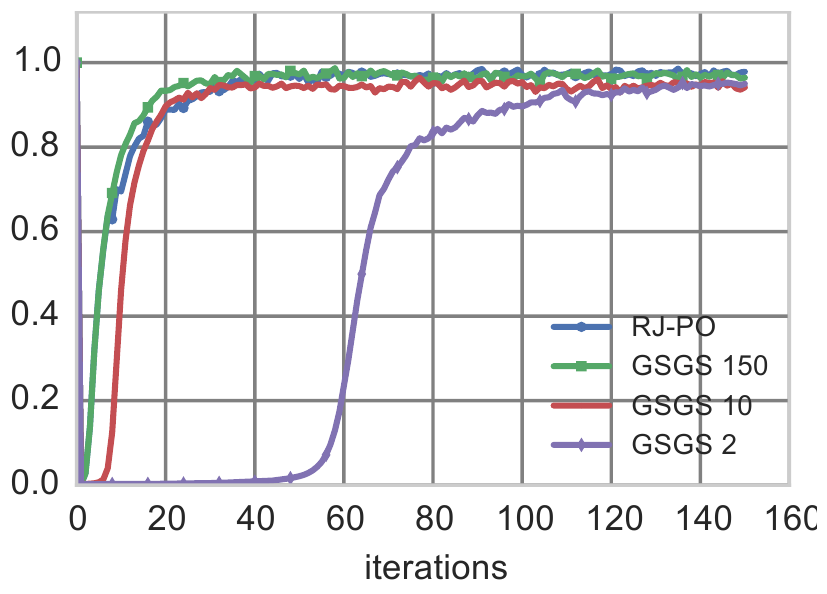}
		\label{fig:gn_iter}
	} %
	\subfloat[$\gn$ as time]{
		\includegraphics[width=0.2\textwidth]{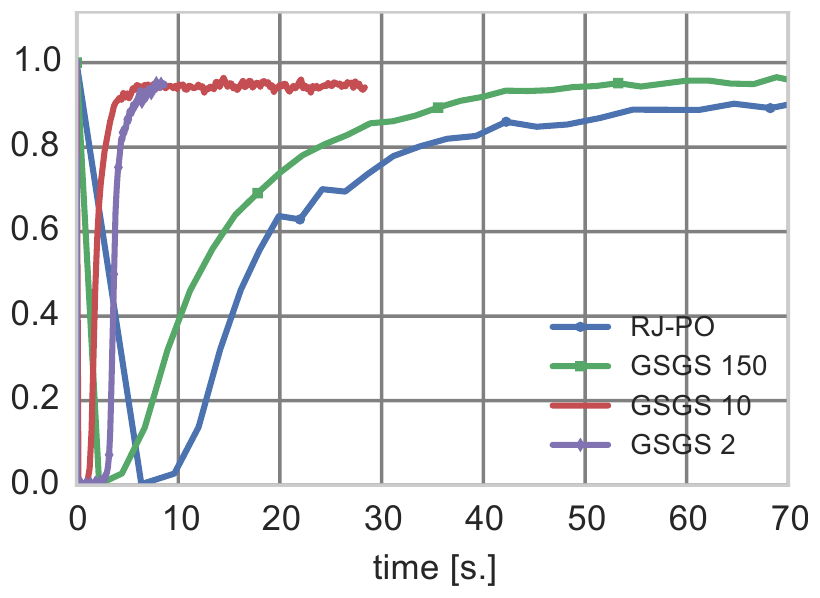}
		\label{fig:gn_times}
	}

	\subfloat[$\gx$ as iteration]{
		\includegraphics[width=0.2\textwidth]{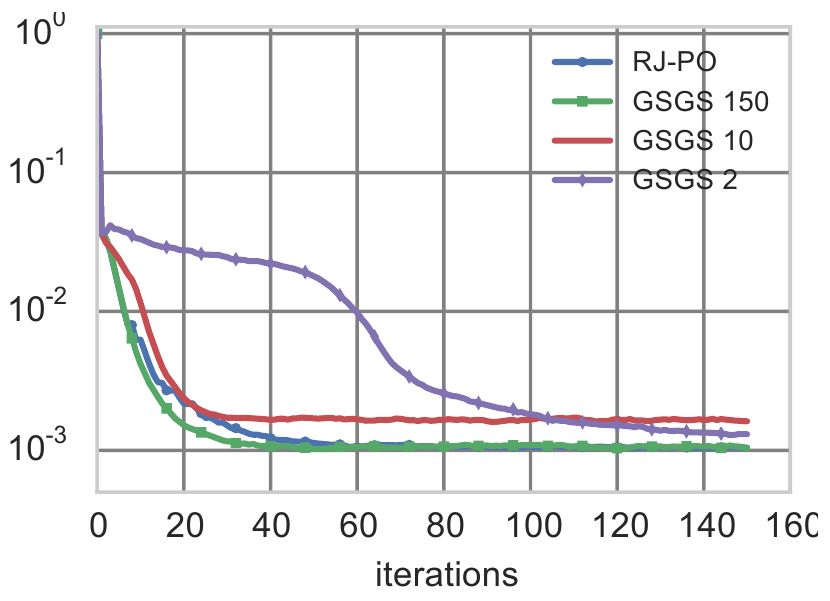}
		\label{fig:gx_iter}
	}
	\subfloat[$\gx$ as time]{
		\includegraphics[width=0.2\textwidth]{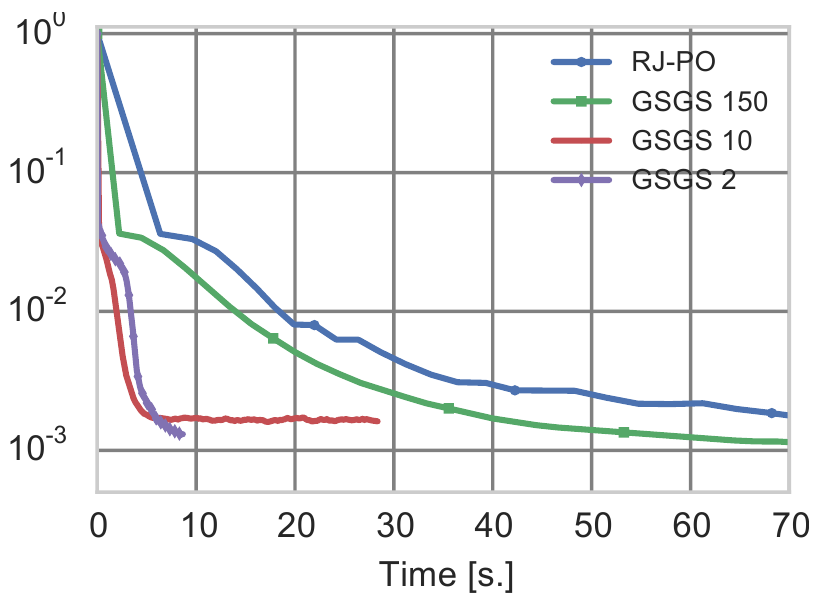}
		\label{fig:gx_times}
	}

	\caption{Chains of hyper parameters $\gx$ and $\gn$.}
	\label{fig:chain_n1}
\end{figure}

To illustrate the effect of the perturbation for good space
exploration, Fig.~\ref{fig:noper} shows the results when no
perturbations $\varepsilonb^{(t)}$ are done and with $N_D = 10$. In this case, the hypotheses of Proposition \ref{prop_proba_invariante} are no longer verified and those of Proposition~\ref{prop_irreductibilite} cannot be verified in practice. Moreover, the results show
that both the covariance and the hyperparameters are wrongly estimated. This effect leads to an over-regularized image. A possible explanation is that the conjugate directions of the GSGS explore in a privileged way the directions of small
variance (highest eigenvalues of $\Qb$).

\begin{figure}[htbp]
	\centering

	\subfloat[\textsc{pm}]{
		\includegraphics[height=0.19\textwidth]{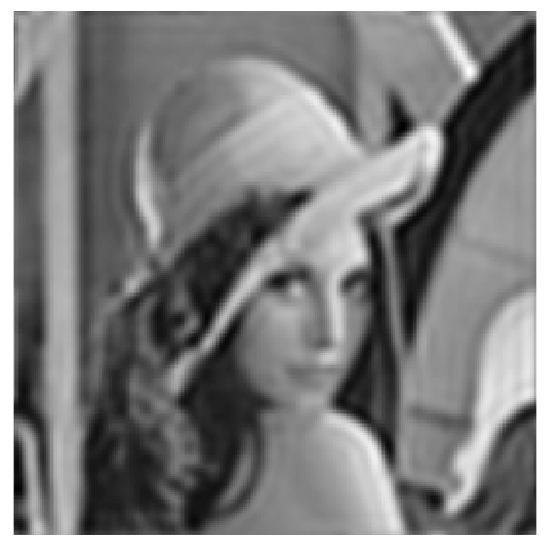}
	} %
	\subfloat[\textsc{psd}]{
		\includegraphics[height=0.19\textwidth]{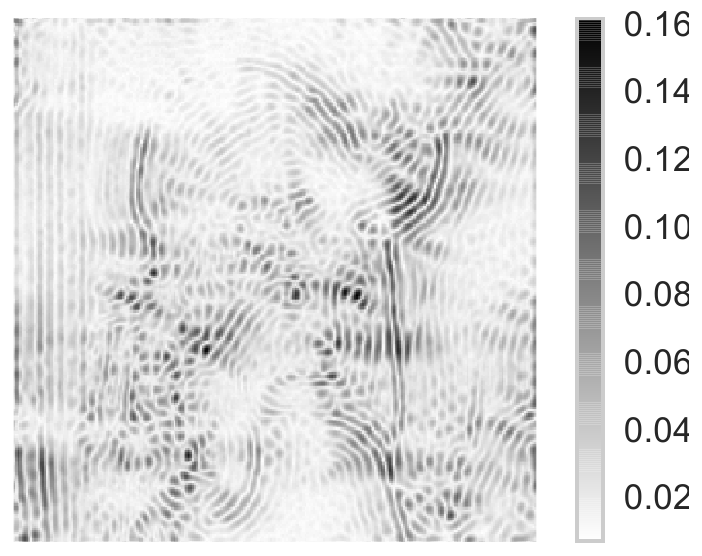}
	}

	\subfloat[$\gn$]{
		\includegraphics[width=0.2\textwidth]{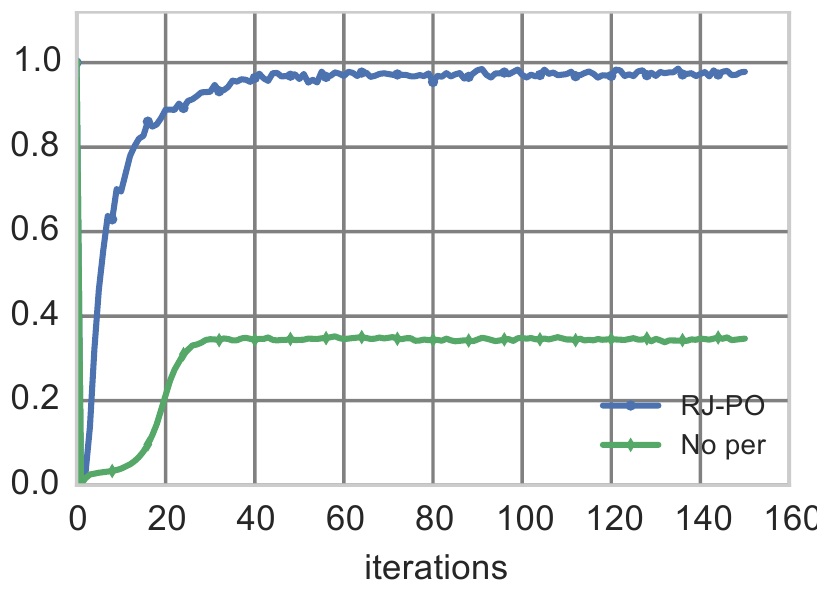}
		\label{fig:gn_iter_noper}
	} %
	\subfloat[$\gx$]{
		\includegraphics[width=0.2\textwidth]{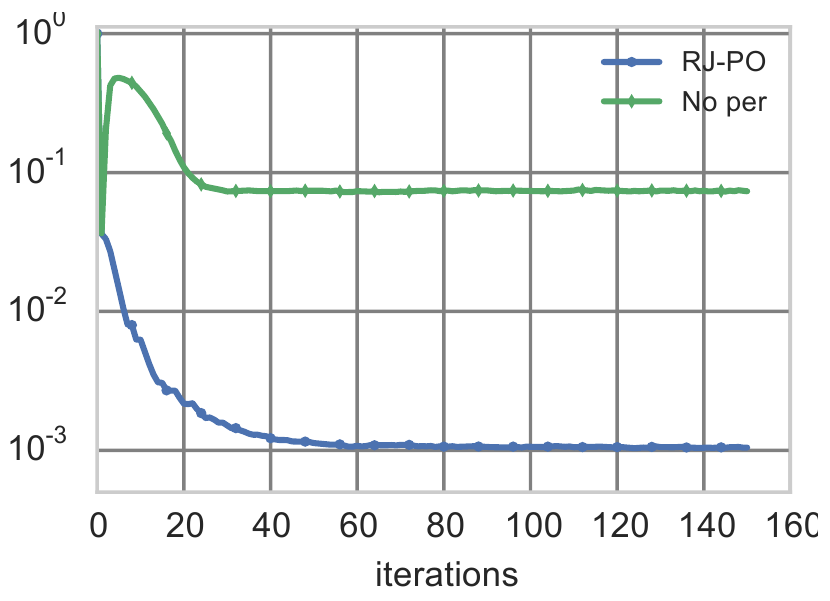}
		\label{fig:gx_iter_noper}
	}

	\caption{Results without perturbation and $N_D = 20$.}
	\label{fig:noper}
\end{figure}

Regarding the computational cost, all the presented algorithm are dominated by the cost of matrix-vector product $\Qb\xb$.  The cost thus depends on the specific problems and the structure of $\Qb$ in the same way than for conjugate gradient algorithm. For super-resolution problems, the cost of the matrix-vector product is almost equal to two discrete Fourier transforms of images. That said, the total number of matrix-vector is related to $N_D$ and the number of Gibbs iteration.

The main concluding comment is that the proposed algorithm allows a
great improvement in the convergence time of the Gibbs sampler while being
convergent to the true joint posterior law. However the speed
improvement can come with a bad covariance estimation if the
number $N_D$ of directions for the image $\xb$ is not sufficient.

\section{Conclusion}\label{sec:conclusion}

The handling of high-dimensional law, especially Gaussian, appears in
many linear inverse and estimation problems. With the growing interest
in ``Big Data'' and non stationary problems this task becomes
critical. Moreover, the uncertainty around the estimated
values, or the confidence interval, remains one of the difficult points combined with hyperparameter estimation for automatic method design.

The main contributions of this paper is (i) the proposition of a new algorithm in the class of the Gibbs samplers, able to address the case of high-dimensional Gaussian conditional laws, and (ii) the convergence proof of the algorithm. 
It relies on a random excursion along a small set of directions instead of handling with the high dimensional distribution. The directions are appropriately chosen according to the gradient of the potential.

This new algorithm is shown to be an efficient alternative to existing work like the PO-type algorithms: we ensure the theoretical convergence of the algorithm and, in some cases, we can show a drastic computing-time improvement.

The convergence of the algorithm is proved, provided that a random perturbation around the gradient direction is introduced. Even if in theory the only condition to ensure convergence is to choose a perturbation distribution supported in the whole space, it appears in practice that the results are very sensitive to the choice of the distribution. Moreover, the choice of the Gaussian distribution $\Nc(0,\Qb_\thetab)$ is the only case where the algorithm is more efficient than the PO and RJ-PO algorithm. The objective of our further work will be to better understand this high sensitivity to the choice of the perturbation distribution, that is, at this time, an open problem.

In further work the objective will be to study the convergence rate of the GSGS. In particular, the geometric ergodicity is an important property that ensures a fast convergence and allows us to give estimations of standard errors. The geometric ergodicity of Gibbs samplers has long been studied \cite{Roberts94} and a lot of results are shown in the Gaussian case \cite{Roberts97}, as well as for application in Bayesian hierarchical models \cite{Papaspiliopoulos08}, also in the case of joint Gaussian and Gamma distribution \cite{Hobert98, Johnson15}, the latter being close to our illustration example.

Also, one has to choose the number $N_D$ of mutually conjugate directions to sample at each iteration of the algorithm. In theory, this does not affect the convergence properties of the algorithm. As a perspective, one can propose an automatic choice of $N_D$, following the work in \cite{Gilavert15} for the RJ-PO.

The proposed algorithm is somewhat independent of the chosen
direction. The use of preconditioner to compute direction as in preconditioned conjugate gradient should improve the computational cost by an $N_D$ parameter smaller than at the present time. It depends, however, on each addressed problem.

This paper is
focused on linear conditionally Gaussian models. By use of hidden
variable, the algorithm should also be able to handle non Gaussian models that are still conditionally Gaussian.

\appendix

\subsection{Proof of Proposition \ref{prop_proba_invariante}}

\label{annexe_proba_invariante}
This appendix is devoted to prove Proposition \ref{prop_proba_invariante}. It is mainly inspired by the proofs presented in \cite{Levine06} (see also~\cite{Levine05,Latuszynski13}) for different random scan strategies in order to sample $p(\xb|\thetab)$. The only difference is that the random choice is not according to a set of coordinates of $\xb$ in the canonical basis, but according to a mutually conjugate set with respect to a current matrix $\Qb_\thetab$. Therefore the same arguments as detailed in \cite{Levine06} can be used to prove the irreducibility: if the support of the density $p(\varepsilonb)$ is $\Rbb^N$, all the directions can be explored in one step of the algorithm. Therefore any $\yb \in \Rbb^N$ can be reached in one step by taking, for example, $\db_1=\xtmun - \yb$, $\wt{\alpha}_1=1$, $\wt{\alpha}_n=0$, $n=2,\ldots,N_D$. Using classic continuity arguments, we can deduce that the probability of reaching any open ball $\Bc(\yb,r)$, centered in $\yb$ of radius $r$, conditional to any current point $\xt$, is strictly positive, which ensures the chain to be irreducible.

The rest of the proof focuses on the fact that $p(\xb,\thetab)$ is an invariant probability of the chain. We use the same arguments and notations of \cite{Levine06}.
Let $\xb \in \Rbb^N$ and a set $D$ of mutually conjugate directions with respect to a definite positive matrix $\Qb$. We decompose $\xb=(\xb_D,\xb_{\setminus D})$ which is always possible as explained in section \ref{sec_preliminary_results}.

Define $(\xb',\thetab') \in \Rbb^N \times \Theta$ a current point and $(\xb',\thetab') \in \Rbb^N \times \Theta$ the point obtained by Algorithm \ref{algo_pgsgs} with the transition Kernel:
\begin{equation*}
	P(\xb,\thetab|\xb',\thetab')= \pi(\thetab|\xb',\thetab') \pi(\xb_{D} | \xb_{\setminus D},\xb',\thetab) \delta (\xb_{\setminus D}- \xb'_{\setminus D})
\end{equation*}
with $\pi$ denoting any conditional probability and
$\delta$ is the Dirac function. The objective is to show that if $(\xb',\thetab')$ is distributed according to the joint distribution $p$, then $(\xb,\thetab)$ is also distributed according to $p$.

Let $A \subset \Rbb^N$ be a measurable set. The following lines are the result of the definition of the transition Kernel, the use of the general product rule, and of sequential integration with respect to $\thetab'$, $\xb'_D$ and $\xb'_{\setminus D}$:
\begin{eqnarray*}
&& \hspace*{-1cm}\Pbb((\xb,\thetab) \in A) \\
 & = & \int \unbb_{A}(\xb,\thetab) P(\xb,\thetab|\xb',\thetab') p(\xb',\thetab') \dD\xb \dD\thetab \dD\xb' \dD \thetab' \\
& = & \int \unbb_{A}(\xb,\thetab) \pi(\thetab|\xb',\thetab') \pi(\xb_{D} | \xb_{\setminus D},\xb',\thetab)\dots \\
& & \hspace*{1.1cm} \dots \delta
(\xb_{\setminus D}- \xb'_{\setminus D})p(\xb',\thetab') \dD\xb \dD\thetab \dD\xb' \dD \thetab'   \\
& = & \int \unbb_{A}(\xb,\thetab) p(\xb',\thetab) \pi(\xb_{D} | \xb_{\setminus D},\xb',\thetab)\dots \\
& & \hspace*{2.7cm} \dots \delta
(\xb_{\setminus D}- \xb'_{\setminus D}) \dD\xb \dD\thetab \dD\xb'  \\
& = & \int \unbb_{A}(\xb,\thetab) p(\xb'_{\setminus D},\thetab) \pi(\xb_{D} | \xb_{\setminus D},\xb'_{\setminus D},\thetab)\dots \\
& & \hspace*{2.7cm} \dots \delta
(\xb_{\setminus D}- \xb'_{\setminus D}) \dD\xb \dD\thetab \dD\xb'_{\setminus D}  \\
& = & \int \unbb_{A}(\xb,\thetab) p(\xb_{\setminus D},\thetab) \pi(\xb_{D} | \xb_{\setminus D},\thetab)\dD\xb \dD\thetab \\
& = & \int \unbb_{A}(\xb,\thetab) p(\xb,\thetab) \dD\xb \dD\thetab
\end{eqnarray*}
%

Hence the joint probability $p(\xb,\thetab)$ is an invariant probability of the Markov chain produced by Algorithm \ref{algo_pgsgs}.

\subsection{Proof of Proposition \ref{prop_irreductibilite}}
\label{annexe_irreductibilite}
This appendix is dedicated to prove Proposition \ref{prop_irreductibilite}. Let $(\xb^{(0)},\thetab^{(0)})\in \Rbb^N \times \Theta$ be a current point and $(\xt,\thetat)$ the point produced by the chain of Algorithm \ref{algo_pgsgs} at iteration $t$. The objective is to prove that for any non-negligible subset $A \subset \Rbb^N \times \Theta$, there is $T\ge0$ such as $\Pbb((\xb^{(T)},\thetab^{(T)})\in A|\xb^{(0)},\thetab^{(0)})>0$. Using the hypothesis H-2, it is sufficient to prove that for any non-negligible subset $A_x \in \Rbb^N$, there is $T\ge0$ such as:
\begin{equation}
	\Pbb(\xb^{(T)}\in A_x|\xb^{(0)},\thetab^{(0)})>0
\label{eq_proba_positif}
\end{equation}
Given $\xb^{(0)}$, we denote by $\thetab$ the corresponding element that respects conditions H-3. It is sufficient to prove the Proposition in the following framework:
\begin{enumerate}[{F}-1]
\item $\thetab^{(N+1)}=\thetab^{(N)} = \ldots = \thetab^{(0)}=\thetab$, \label{simple_theta}
\item $\mb_\thetab=0$, \label{simple_m}
\item $\Qb_\thetab=\text{diag}(q_1,\ldots,q_N)$ is diagonal. \label{simple_Q}
\end{enumerate}
Indeed, if we prove the inequality \eqref{eq_proba_positif} with fixed $\thetab$ for $N+1$ iterations, continuity arguments using conditions H-1 and H-2 will end the proof of the Proposition. The simplifications F-\ref{simple_m} and F-\ref{simple_Q} can be assumed by a change of variable $\yb^{(t)}=\xt-\mb_\thetab$ and by considering the basis of $\Rbb^N$ formed by the eigenvectors of $\Qb_\thetab$.

In this simplified framework, the chain of Algorithm \ref{algo_pgsgs} produces $\xt,~t=1,\ldots,N+1,$ such as:
\begin{equation*}
	\xt=(\Ib - \alpha^{(t)} \Qb_\thetab)(\Ib - \alpha^{(t-1)} \Qb_\thetab)\ldots (\Ib - \alpha^{(1)} \Qb_\thetab) \xb^{(0)},
\end{equation*}
with $\Ib$ the identity matrix in $\Rbb^N$ and, noting $\xb=(x_1,\ldots,x_N)\T$, we have, for $n=1,\ldots,N$:
\begin{equation}
\xt_n  = (1 - \alpha^{(t)} q_n)(1 - \alpha^{(t-1)} q_n)\ldots (1 - \alpha^{(1)} q_n) \xb^{(0)}_n.
\label{eq_polynome}
\end{equation}
The hypothesis H-3.2 ensures that $\xb^{(0)}_n\neq0$, $n=1,\ldots,N$, therefore we can assume without loss of generality that $\xb^{(0)}_n=1,~n=1,\ldots,N$, and equation  \eqref{eq_polynome} is, in this case:
\begin{equation}
	\xt_n  = (1 - \alpha^{(t)} q_n)(1 - \alpha^{(t-1)} q_n)\ldots (1 - \alpha^{(1)} q_n).
	\label{eq_polynome2}
\end{equation}

The following Lemma proves that any point in $\Rbb^N$ can be reached by the chain in $N+1$ iterations.
\begin{lemma}
For any $\yb \in \Rbb^N$, there is $\alpha=(\alpha^{(1)},\ldots,\alpha^{(N+1)})$ such as $\xb^{(N+1)}=\yb$, where $\xb^{(N+1)}$ is defined by \eqref{eq_polynome2} with $t=N+1$.
\end{lemma}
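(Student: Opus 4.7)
The plan is to recast the lemma as a problem in polynomial interpolation and factorization. Observe that $\prod_{t=1}^{N+1}(1-\alpha^{(t)}u)$ is a polynomial $P$ of degree at most $N+1$ satisfying $P(0)=1$, so the lemma reduces to: given $\yb \in \Rbb^N$, construct such a polynomial $P$ with $P(q_n) = y_n$ for $n=1,\ldots,N$ that factors over $\Rbb$ into $N+1$ linear factors of the form $(1 - \alpha u)$. Since $N$ interpolation constraints are imposed on an object carrying $N+1$ free parameters, one degree of freedom remains, which I will use to steer the roots onto the real line.

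Concretely, by H-3.1 together with the positive definiteness of $\Qb_\thetab$, the eigenvalues $q_1,\ldots,q_N$ are distinct and strictly positive, so $0,q_1,\ldots,q_N$ form $N+1$ distinct abscissas. Lagrange interpolation yields a unique polynomial $L$ of degree $\le N$ with $L(0)=1$ and $L(q_n)=y_n$ for all $n$. For any $s\in\Rbb$, I would then introduce the one-parameter family
\[
P_s(u) = L(u) + s\, u \prod_{n=1}^{N}(u-q_n).
\]
Clearly $P_s(0)=L(0)=1$ and $P_s(q_n)=L(q_n)=y_n$, so the whole pencil $\{P_s\}_{s\in\Rbb}$ satisfies the interpolation constraints by construction; it remains only to choose $s$ so that $P_s$ has the required factorization.

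The hard part is precisely this last step: ensuring that $P_s$ splits over $\Rbb$ into real linear factors. The key observation is that $P_s(u)/s \to u\prod_{n=1}^{N}(u-q_n)$ coefficient-wise as $|s|\to\infty$, and the limit polynomial has the $N+1$ distinct simple real roots $0,q_1,\ldots,q_N$. Since simple roots of a polynomial depend continuously on its coefficients, for $|s|$ large enough the $N+1$ roots of $P_s$ lie in small pairwise disjoint real neighborhoods of $0,q_1,\ldots,q_N$, hence are all real; denote them $r_1,\ldots,r_{N+1}$. Because $P_s(0)=1\neq 0$, none of the $r_t$ vanishes, and setting $\alpha^{(t)}=1/r_t$ yields the factorization $P_s(u)=\prod_{t=1}^{N+1}(1-\alpha^{(t)}u)$ (the condition $P_s(0)=1$ pinning down the overall scale). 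Evaluating at $u=q_n$ then gives $x_n^{(N+1)}=P_s(q_n)=y_n$ for every $n$, which establishes the lemma.
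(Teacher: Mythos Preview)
Your proof is correct and follows essentially the same approach as the paper: both recast the problem as polynomial interpolation at the nodes $0,q_1,\ldots,q_N$, form a one-parameter pencil $L+\tau R$ (your $L$ and $u\prod_n(u-q_n)$ are exactly the paper's $Q$ and $R$), and push the parameter to infinity so that the roots become real by continuity. Your write-up is in fact slightly more careful than the paper's, since you explicitly verify that the roots are nonzero (from $P_s(0)=1$) and that the normalization at $u=0$ forces the factorization $P_s(u)=\prod_t(1-\alpha^{(t)}u)$.
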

\begin{proof}
This can be done by interpreting it as an interpolation problem: given $\yb\in \Rbb^N$, the objective is to show that there is a polynomial $P_\alpha^{N+1}$ such as:
\begin{eqnarray}
	P_\alpha^{N+1}(q_n) & = & y_n, ~~ n=1,\ldots,N \label{eq_cons1}\\
	P_\alpha^{N+1}(0) & = & 1 \label{eq_cons2}
\end{eqnarray}
with $P_\alpha^{N+1}$ defined by the right hand side of \eqref{eq_polynome2} with $t=N+1$. The constraint \eqref{eq_cons2} is due to the specific form of $P_\alpha^{N+1}$. Also the fact that the parameters $\alpha^{(n)}$ must be real, implies that the polynomial $P_\alpha^{N+1}$ must have only real roots. It is well known that there is a polynomial of degree $N$ that respects \eqref{eq_cons1} and \eqref{eq_cons2}. Let  us denote by $Q$ such a polynomial. But the roots of $Q$ may be complex. However we can show that there is a polynomial of degree $N+1$ with real roots that respects the conditions \eqref{eq_cons1} and \eqref{eq_cons2}. Indeed, let us consider the polynomial $Q$ and a polynomial $R$ of degree $N+1$ such as $R(q_1)=R(q_2)=\ldots=R(q_N)=R(0)=0$. Therefore any polynomial $P_\tau=Q+\tau R$, $\tau \in \Rbb$, respects conditions \eqref{eq_cons1} and \eqref{eq_cons2}, and it is trivial to show that for $\tau^*$ sufficiently large, the polynomial $P_{\tau^*}$ has all its roots $r_n^*\in\Rbb,~n=1,\ldots,N$. Therefore, taking $P_\alpha^{N+1}=P_{\tau^*}$, \ie $\alpha^{(n)}=1/r_n^*$ ends the proof of the lemma.
\end{proof}
Using this lemma and the continuity of $P_\alpha^{N+1}$ with respect to $\alpha$, it is trivial to prove \eqref{eq_proba_positif} and then the Proposition.
\bibliographystyle{IEEEtran}

\bibliography{Biblio_GSGS}

\end{document}